\newcommand{\F}{\mathbb{F}}
\newcommand{\AutLE}{\mathsf{Aut_{LE}}}
\newcommand{\GL}{\mathrm{GL}}
\newcommand{\AGL}{\mathrm{AGL}}
\newcommand{\ord}{\mathrm{ord}}
\newcommand{\rcf}{\mathrm{RCF}}
\newcommand{\companion}{\mathrm{Comp}}
\newcommand{\wt}{\mathrm{wt}}
\newcommand{\diag}{\mathrm{diag}}
\newtheorem{definition}{Definition}
\newtheorem{lemma}{Lemma}
\newtheorem{proposition}{Proposition}
\newtheorem{theorem}{Theorem}
\title{New Instances of Quadratic APN Functions\thanks{This  work  was  funded  by Deutsche  Forschungsgemeinschaft  (DFG);  project  number 411879806 and by DFG under Germany's Excellence Strategy - EXC 2092 CASA - 390781972.}\thanks{This is the version accepted to IEEE Transactions on Information Theory. DOI of the final published version: \href{https://dx.doi.org/10.1109/TIT.2021.3120698}{10.1109/TIT.2021.3120698}.}\thanks{
$\copyright$ 2021 IEEE.  Personal use of this material is permitted.  Permission from IEEE must be obtained for all other uses, in any current or future media, including reprinting/republishing this material for advertising or promotional purposes, creating new collective works, for resale or redistribution to servers or lists, or reuse of any copyrighted component of this work in other works.}}
\renewcommand\footnotemark{}
\author{Christof Beierle and Gregor Leander \\ \\
Ruhr University Bochum, Bochum, Germany}
\date{}
\begin{document}

\maketitle

\begin{abstract}
    In a recent work, Beierle, Brinkmann and Leander presented a recursive tree search for finding APN permutations with linear self-equivalences in small dimensions. In this paper, we describe how this search can be adapted  to find many new instances of quadratic APN functions. In particular, we found 12,921 new quadratic APN functions in dimension eight, 35 new quadratic APN functions in dimension nine and five new quadratic APN functions in dimension ten up to CCZ-equivalence. Remarkably, two of the 35 new APN functions in dimension nine are APN permutations. 
    
    Among the 8-bit APN functions, there are three extended Walsh spectra that do not correspond to any of the previously-known quadratic 8-bit APN functions and, surprisingly, there exist at least four CCZ-inequivalent 8-bit APN functions with linearity $2^7$, i.e., the highest possible non-trivial linearity for quadratic functions in dimension eight.  
    
    {\bf Keywords:} almost perfect nonlinear, Walsh spectrum, linearity, self-equivalence, EA-equivalence
\end{abstract}

\section{Introduction}
Vectorial Boolean functions are used as S-boxes in many block ciphers and thus belong to the fundamental building blocks in symmetric cryptography. When such functions are used in actual cryptographic designs, one has to ensure that they fulfill certain criteria in order to prevent cryptanalytic attacks.  Functions offering the best possible resistance against differential attacks~\cite{DBLP:journals/joc/BihamS91} are called \emph{almost perfect nonlinear (APN)}. 

\begin{definition}{\cite{DBLP:conf/crypto/NybergK92}}
Let $n$ be a positive integer. A function $F\colon \F_2^n \rightarrow \F_2^n$ is called \emph{almost perfect nonlinear} (APN) if, for every $a \in \F_2^n\setminus \{0\} ,b \in \F_2^n$, the equation $F(x) + F(x+a) = b$ has at most 2 solutions $x \in \F_2^n$.
\end{definition}

We know several infinite families of APN functions as well as some sporadic instances. The vast majority of the known instances corresponds to monomial functions over the finite field $\F_{2^n}$ or to functions with algebraic degree 2 over $\F_2^n$ (i.e., quadratic functions). Indeed, at the time of writing, only a single APN instance is known that is not CCZ-equivalent to a monomial function or to a quadratic function~\cite{DBLP:journals/amco/EdelP09}. The \emph{linearity} is a measure for how well a function can be approximated by an affine function and is in particular of importance for resistance of block ciphers against linear cryptanalysis~\cite{DBLP:conf/eurocrypt/Matsui93}. Interestingly, APN functions often have low linearity as well.  While this is always true for quadratic APN functions in odd dimension $n$ (since those functions are almost bent~\cite{ccz}), the situation is more complicated for even values of $n$. For instance, prior to our work, two possible values of the linearity and three possible extended Walsh spectra of quadratic APN functions in dimension $n=8$ were known~\cite[Table 4.2]{arshad}. Moreover, in dimension $n=6$, we know one quadratic APN instance that admits the highest possible linearity of $2^{n-1}$, see the list in~\cite{DBLP:journals/amco/EdelP09}. It is an open question whether such a high linearity is also achievable by quadratic APN functions in higher dimensions~\cite{DBLP:journals/tit/Carlet18}. 

A big open problem is to find new instances of APN \emph{permutations} in even dimensions. Until now, only a single instance of an APN permutation in even dimension is known, namely for $n=6$~\cite{browning2010apn}.  
It is well known that a quadratic APN function in even dimension cannot be a permutation~\cite{DBLP:conf/eurocrypt/SeberryZZ94}. However, the aforementioned permutation is CCZ-equivalent to a quadratic function. The potential to discover new APN permutations in even dimension that are CCZ-equivalent to quadratic functions  is one motivation to explicitly search for quadratic APN functions.

For $n \leq 5$, a complete classification of APN functions up to CCZ-equivalence is known~\cite{DBLP:journals/dcc/BrinkmannL08} and for $n=6$, such a classification is known for APN functions up to algebraic degree three~\cite{langevin}. In 2009, Edel and Pott introduced the switching construction and found new APN functions in dimensions  $n\leq8$ by replacing components of previously-known APN functions~\cite{DBLP:journals/amco/EdelP09}. This led to the discovery of new APN functions, bringing up the number of known CCZ-inequivalent APN functions in dimension $n=7$ and $n=8$ to 19 and 23, respectively. A breakthrough was achieved by the works~\cite{weng2013quadratic} and~\cite{DBLP:journals/dcc/YuWL14}, in which the authors found many new quadratic APN instances in dimension $n=7$ and $n=8$.  Recently, a new quadratic APN function in dimension 7 was found and a complete classification of quadratic APN functions in dimension 7 was achieved~\cite{cryptoeprint:2020:1113,cryptoeprint:2020:1515}. Excluding our results, there are 491 known APN instances in dimension $n=7$ and 8,192 known APN instances in dimension $n=8$ at the time of submission of this manuscript\footnote{In 2021, after the preprint of our work was made public, Yu and Perrin~\cite{cryptoeprint:2021:574} found more than 5,400 other new quadratic APN instances in dimension eight by the same method as in~\cite{DBLP:journals/dcc/YuWL14}, bringing up the total number of known 8-bit APN instances (including our results) to over 26,500. All of the APN instances found in~\cite{cryptoeprint:2021:574} have an extended Walsh spectrum that was already known prior to our work (i.e., $\mathcal{W}_0, \mathcal{W}_1$, or $\mathcal{W}_2$ as defined in Section~\ref{sec:search_le_aut}).} in December 2020. Besides searching for APN instances in fixed dimensions, several infinite families of APN functions have been found, see~\cite{DBLP:journals/ffa/BudaghyanCV20} for a recent summary.

In~\cite{beierle2020}, the authors utilized a recursive tree search for finding APN permutations with linear self-equivalences and were able to classify all APN permutations with linear self-equivalences in dimension $n=6$. 

\subsection{Our Contribution}
By adapting the algorithm of~\cite{beierle2020}, we find many new instances of quadratic APN  functions. The search strategy is conceptually very simple. The basic idea is to fix the look-up table of the APN function $F$ 
entry by entry. Each time a new entry is fixed, besides checking whether there is a contradiction to the property of being APN, the algorithm further checks whether the values in the look-up table already imply the existence of a monomial of algebraic degree higher than two in the algebraic normal form of $F$. The main difference to the algorithm of~\cite{beierle2020} is the incorporation of the check for the existence of high-degree monomials and the removal of the restriction on the bijectivity of the output function. 

By using this approach, also combined with considering linear self-equivalences, we find 12,733 new CCZ-inequivalent quadratic APN functions in dimension $n=8$, which is a substantial increase compared to the 8,192 previously-known APN instances in dimension $n=8$. 

Searching for new instances of APN functions in dimensions higher than $n=8$ is known to be very hard and resource consuming.  Indeed, to the best of our knowledge, previous search methods have not been very successful in finding new APN instances and the only previously-known APN instances in  dimension $n\in\{9,10\}$ are either monomial functions, one of the polynomials with coefficients in $\F_{2}$~\cite{cryptoeprint:2019:1491}, or those that come from the infinite families given in~\cite{budaghyan2009construction},~\cite{DBLP:journals/tit/BudaghyanCCCV20},~\cite{budaghyan2020generalized},~\cite{DBLP:journals/tit/BudaghyanC08}, \cite{DBLP:journals/ffa/BudaghyanCL09}, \cite{DBLP:journals/dcc/Taniguchi19a}, or \cite{DBLP:journals/iacr/BudaghyanHK19}.  Clearly, our approach becomes less efficient as well. However, we are still able to present 35 new APN instances in dimension $n=9$ and five new APN instances in dimension $n=10$. Remarkably, two of the new 9-bit APN functions are permutations. Until now, the only known APN permutations up to CCZ-equivalence were the monomial functions in odd dimension $n$, the binomial family for $3 \mid n$  presented in~\cite{DBLP:journals/tit/BudaghyanCL08} and the sporadic 6-bit APN permutation found in~\cite{browning2010apn}. Up to EA-equivalence, the two new APN permutations can be given in univariate representation over $\F_{2^9}$ by
\begin{align*}
   x \mapsto \ &x^3 + ux^{10} + u^2x^{17} + u^4x^{80} + u^5x^{192},\\ 
   x \mapsto \ &x^3 + u^2x^{10} + ux^{24} + u^4x^{80}+ u^6x^{136},
\end{align*}
where $u \in \F_{2^9}^*$ is an element with minimal polynomial $X^3+X+1 \in \F_2[X]$.

Among the extended Walsh spectra of the APN functions found for $n=8$, there are three that do not correspond to any of the previously-known quadratic 8-bit APN functions. In particular, there are four pairwise CCZ-inequivalent  APN functions in dimension 8 having linearity $2^{7}$. 
One such example is the APN function
\begin{align*}
   x \mapsto \ &x^{3} + g^{60} x^{5} + g^{191} x^{6} + g^{198} x^{9} + g^{232} x^{10} + g^{120} x^{12} + g^{54} x^{17} + g^{64} x^{18} + g^{159} x^{20} + \\
   &g^{144} x^{24} + g^{248} x^{33} + g^{203} x^{34} + g^{32} x^{36} + g^{18} x^{40} + g^{216} x^{48} + g^{78} x^{65} + g^{46} x^{66} + g^{91} x^{68} + \\
   &g^{27} x^{72} + g^{70} x^{80} + g^{52} x^{96} + g^{224} x^{129} + g^{18} x^{130} + g^{197} x^{136} + g^{253} x^{144} + x^{160} 
\end{align*}
over $\F_{2^8}$, where $g \in \F_{2^8}^*$ is an element with minimal polynomial $X^8+X^4+X^3+X^2+1 \in \F_2[X]$.

Finally, we apply the switching construction of Edel and Pott~\cite{DBLP:journals/amco/EdelP09} to all the known (by the time of submission of this manuscript) and new quadratic APN instances in dimension $n=7$ and $n=8$, which leads to the discovery of another 188 CCZ-inequivalent APN instances in dimension $n=8$. By using sboxU~\cite{Perrin_sboxu}, we have checked that none of the APN functions we found for $n \in \{8,10\}$ are CCZ-equivalent to a permutation.

The source code of our algorithms is publicly available at~\cite{our_code}. The look-up tables of the new APN instances are available in~\cite{dataset}. We emphasize that our search method is non-exhaustive, so we do not make any claim on the completeness of our findings.

\section{Preliminaries}
For a positive integer $n$, let us denote by $\F_{2^n}$ the field with $2^n$ elements and let $\F_2^n$ denote the $n$-dimensional vector space over $\F_2$. Let $\F_{2^n}^*$ denote the set $\F_{2^n} \setminus \{0\}$. By $\GL(n,\F_2)$ we denote the group of invertible $n \times n$ matrices over $\F_2$ and by $\AGL(n,\F_2)$ we denote the group of affine permutations on $\F_2^n$. Any such affine permutation can be represented as $x \mapsto Lx + b$ for $L \in \GL(n,\F_2)$ and $b \in \F_2^n$. To simplify notation, we are going to use elements of $\GL(n,\F_2)$ and the linear functions that they represent interchangeably throughout this work. In other words, for an element $L \in \GL(n,\F_2)$, we denote the linear function $x \mapsto Lx$ by $L$ as well. The symbol $I_n$ denotes the identity matrix in $\GL(n,\F_2)$. By $\diag(M_1,M_2,\dots,M_k)$, we denote the block-diagonal matrix consisting of the $k$ blocks $M_1,\dots,M_k$, where $M_1$ corresponds to the upper-left block.

For an element $M \in \GL(n,\F_2)$, we denote by $\ord(M)$ the \emph{multiplicative order} of $M$, which is defined as the smallest positive integer $i$ such that $M^{i} = I_n$. Similarly, for  $x \in \F_2^n$, we denote by $\ord_M(x)$ the smallest positive integer $i$ for which $M^i(x) = x$. The \emph{minimal polynomial} of a matrix $M$ over $\F_2$ is defined as the polynomial $p \in \F_2[X]$ of least positive degree such that $p(M)=0$.

For a polynomial 
$q = X^n + q_{n-1}X^{n-1} + \dots + q_1X + q_0 \in \mathbb{F}_2[X]$, the \emph{companion matrix} of $q$ is defined as the $n \times n$ matrix
\[
\companion(q) \coloneqq \left[\begin{array}{ccccc}
0 &  & &  & q_0 \\
1 & 0 &  & & q_1 \\
 & \ddots & \ddots &  & \vdots \\
 &  & 1 & 0 & q_{n-2} \\
 &  &  & 1 & q_{n-1}
\end{array}\right],
\]
which is an element of $\GL(n,\F_2)$ if and only if $q_0=1$. For $x \in \F_2^n$, we denote by $\wt(x)$ the Hamming weight of $x$, which is defined as the number of non-zero coordinates of $x$.

\subsection{Representations of Vectorial Boolean Functions}
For a comprehensive introduction to (vectorial) Boolean functions, we refer to~\cite{carlet_2021}. Here, we recall the most important concepts needed in the remainder of the paper. Note that in the following, we restrict to the case of functions from $\F_2^n$ to itself.

A vectorial Boolean function $F \colon \F_2^n \rightarrow \F_2^n$ can be uniquely expressed as a multivariate polynomial in $\F_2^n[X_1,\dots,X_n]/(X_1^2+X_1,\dots,X_n^2+X_n)$, called the \emph{algebraic normal form (ANF)}. In particular, there exist $a_u \in \F_2^n$ such that 
\[ F(x_1,\dots,x_n) = \sum_{u \in \F_2^n} \left( a_u \prod_{i \in \{1,\dots,n\}}x_i^{u_i}\right).\]
The \emph{algebraic degree} of $F$ is defined as $\max\{\wt(u) \mid a_u \neq 0, u \in \F_2^n\}$. The function $F$ is called \emph{affine} if it is of algebraic degree at most 1 and it is called \emph{quadratic} if it is of algebraic degree 2. The coefficients $a_u$ of the ANF can be obtained by the so-called binary Möbius transform via
\begin{eqnarray}\label{eqn:moebius}
a_u &=& \sum_{x \in \F_2^n, x \preceq u} F(x),
\end{eqnarray} 
where the relation $x \preceq u$ holds if and only if, for all $i \in \{1,\dots,n\}$, we have $(u_i=0 \Rightarrow x_i = 0)$.

Moreover, we can uniquely represent any vectorial Boolean function $F \colon \F_2^n \rightarrow \F_2^n$ as a function from $\F_{2^n}$ to itself via $x \mapsto f(x)$ with $f \in \F_{2^n}[X]/(X
^{2^n}+X)$. This is called the \emph{univariate representation} of $F$.

The \emph{Walsh transform} of $F$ at $(\alpha,\beta) \in \F_2^n \times \F_2^n$ is defined as the sum
\[ \widehat{F}(\alpha,\beta) =  \sum_{x \in \F_2^n} (-1)^{\langle \alpha,x \rangle + \langle \beta,F(x)\rangle}\]
over the integers,
where $\langle x,y \rangle$ denotes the inner product of the vectors $x,y \in \F_2^n$, defined as $\langle x,y \rangle \coloneqq \sum_{i=1}^{n} x_iy_i \mod 2$. The multiset $\{ |\widehat{F}(\alpha,\beta)| \mid \alpha,\beta \in \F_2^n\}$ is called the \emph{extended Walsh spectrum} of $F$. For $\beta \in \F_2^n \setminus \{0\}$, the function $F_{\beta}\colon x \mapsto \langle \beta, F(x)\rangle$ is called a \emph{component} of $F$ and the value $\max_{\alpha \in \F_2^n, \beta \in \F_2^n \setminus \{0\}} |\widehat{F}(\alpha,\beta)|$ is called the \emph{linearity} of $F$. The linearity of $F$ can be understood as a measure of how well a component of $F$ can be approximated by an affine function. In particular, a linearity of $2
^n$ corresponds to the case of $F$ having an affine component.

The \emph{difference distribution table (DDT)} of a function $F\colon \F_2^n \rightarrow \F_2^n$ is the $2^n \times 2^n$ integer matrix (where the rows and the columns are indexed by $\alpha \in \F_2^n$ and $\beta \in \F_2^n$, respectively) that contains $|\{x \in \F_2^n \mid F(x) + F(x+ \alpha) = \beta\}|$ in the entry in row $\alpha$ and column $\beta$. The \emph{differential spectrum} of $F$ is defined as the multiset of entries in the DDT of $F$.

\subsection{Equivalence Relations of Vectorial Boolean Functions}
Let $F, G \colon \F_2^n \rightarrow \F_2^n$. There are several well-known equivalence relations on vectorial Boolean functions that preserve both the differential spectrum and the extended Walsh spectrum. The function $G$ is \emph{linear-equivalent} to $F$ if there exist $A,B \in \GL(n,\F_2)$ such that $F \circ A = B \circ G$.  Moreover, $G$ is \emph{extended affine-equivalent} (\emph{EA-equivalent}) to $F$ if there exist $A,B \in \AGL(n,\F_2)$ and an affine, not necessarily invertible, function $C\colon \F_2
^{n} \mapsto \F_2^n$ such that $F \circ A = B \circ G + C$. The functions $F$ and $G$ are called \emph{CCZ-equivalent}~\cite{DBLP:journals/tit/BudaghyanCP06,ccz} if there exists a transformation $\sigma \in \AGL(2n,\F_2)$ such that $\Gamma_G = \sigma(\Gamma_F)$, where $\Gamma_F \coloneqq \{ (x,F(x)) \mid x \in \F_2^n\}$ denotes the graph of $F$. Among the notions of equivalence listed above, CCZ-equivalence is the most general.  An important goal in the study of APN functions is to classify them up to CCZ-equivalence and the following is a useful result for quadratic APN functions.

\begin{theorem}[\cite{yoshiara2012equivalences}]
Two quadratic APN functions $F, G \colon \F_2^n \rightarrow \F_2^n$ are EA-equivalent if and only if they are CCZ-equivalent.
\end{theorem}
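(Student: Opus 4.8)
I would prove the two implications separately, and I would first dispose of the easy one: EA-equivalence is a special case of CCZ-equivalence. Indeed, if $F\circ A = B\circ G + C$ with $A,B\in\AGL(n,\F_2)$ and $C$ affine, then the affine permutation $(x,y)\mapsto\bigl(A(x),\,B(y)+C(x)\bigr)$ of $\F_2^{2n}$ sends the point $(x,G(x))\in\Gamma_G$ to $\bigl(A(x),\,B(G(x))+C(x)\bigr)=\bigl(A(x),F(A(x))\bigr)\in\Gamma_F$, hence maps $\Gamma_G$ onto $\Gamma_F$. So CCZ-equivalence is implied, and the whole content of the theorem is the converse for quadratic APN functions.

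For the converse, assume $F$ and $G$ are quadratic APN and CCZ-equivalent, witnessed by an affine permutation $\sigma$ of $\F_2^{2n}$ with $\sigma(\Gamma_F)=\Gamma_G$. I would write the linear part of $\sigma$ in $n\times n$ blocks $L_1,L_2,L_3,L_4$ (so $L_1$ is the top-left and $L_2$ the top-right block) and its translation part as $(c_1,c_2)$. Then the image of $(x,F(x))$ is $(F_1(x),F_2(x))$ with $F_1(x)=L_1x+L_2F(x)+c_1$ and $F_2(x)=L_3x+L_4F(x)+c_2$, and the hypothesis $\sigma(\Gamma_F)=\Gamma_G$ is equivalent to saying that $F_1$ is a permutation and $G=F_2\circ F_1^{-1}$; note that $F_1$ and $F_2$ are quadratic. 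Geometrically my goal is to show that $\sigma$ preserves the vertical subspace $\{0\}\times\F_2^n$, which is exactly the condition $L_2=0$. Once $L_2=0$ is established, $F_1$ is affine; since the linear part of $\sigma$ is invertible, both $L_1$ and $L_4$ are invertible, and then $F\circ F_1^{-1}=L_4^{-1}G+L_4^{-1}\bigl(L_3F_1^{-1}+c_2\bigr)$ exhibits $G$ as EA-equivalent to $F$ (with $A=F_1^{-1}$, $B=L_4^{-1}$, and an affine $C$). Thus everything reduces to forcing $L_2=0$.

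To force $L_2=0$ I would exploit that $G$ is quadratic through the identity $G\circ F_1=F_2$. For a quadratic map $H$ write $D_aH(x)=H(x+a)+H(x)$ and let $B_H(x,y)=H(x+y)+H(x)+H(y)+H(0)$ denote its symmetric bilinear polarization; quadraticity of $F,F_1,F_2$ means each second derivative $D_bD_aH$ is constant. Since $F$ is quadratic, $D_aF_1(x)=L_1a+L_2D_aF(x)$ is affine in $x$ with linear part $x\mapsto L_2B_F(x,a)$. Expanding $D_bD_a(G\circ F_1)$ via the polarization identity for $G$ and isolating the part of degree two in $x$, the requirement that $D_bD_aF_2$ be constant yields, for all $a,b$, the identity of quadratic forms in $x$
\begin{equation*}
B_G\bigl(L_2F_{\mathrm{q}}(x),\,L_2B_F(a,b)\bigr)+B_G\bigl(L_2B_F(x,a),\,L_2B_F(x,b)\bigr)=0 ,
\end{equation*}
where $F_{\mathrm{q}}$ is the homogeneous quadratic part of $F$. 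If $L_2=0$ this holds trivially, so the task is to show that no nonzero $L_2$ is compatible with it.

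This last step is the main obstacle, and it is precisely where the APN hypothesis enters. For a quadratic APN function the derivative $D_aF$ with $a\neq 0$ is exactly $2$-to-$1$; since its fibres are the cosets of $\ker\bigl(x\mapsto B_F(x,a)\bigr)$ and one checks $B_F(a,a)=0$, this is equivalent to $\ker\bigl(x\mapsto B_F(x,a)\bigr)=\{0,a\}$, so each map $x\mapsto B_F(x,a)$ has rank $n-1$. I would substitute this maximal nondegeneracy into the displayed identity, use the same rigidity for $B_G$ (as $G$ is also quadratic APN), and bring in the nonsingularity of the block matrix of $\sigma$, which ties $L_2$ to $L_1,L_3,L_4$. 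The claim to be extracted is that a nonzero $L_2$ is incompatible with all these rank-$(n-1)$ and kernel constraints: it would force some map $x\mapsto B_F(x,a)$ to have a kernel strictly larger than $\{0,a\}$, contradicting the APN property, or equivalently would make some $D_aF_1$ vanish identically, contradicting that $F_1$ is a permutation. Once $L_2=0$ is proved, the EA-equivalence follows as above. The essential difficulty is therefore purely the rigidity of the polarization $B_F$ of a quadratic APN function—it is as nondegenerate as characteristic two permits—and it is this rigidity that collapses CCZ-equivalence to EA-equivalence.
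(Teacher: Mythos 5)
First, note that the paper does not prove this theorem at all: it is imported from Yoshiara \cite{yoshiara2012equivalences}, so your attempt has to be measured against that proof. Your easy direction (EA-equivalence implies CCZ-equivalence via the triangular map $(x,y)\mapsto(A(x),B(y)+C(x))$) is correct, and your reduction of the converse to the single claim ``$L_2=0$'' is the standard framing. The problem is that the step carrying the entire content of the theorem is not actually carried out. You derive a necessary identity in $x,a,b$, acknowledge that it holds trivially when $L_2=0$, and then say the task is ``to show that no nonzero $L_2$ is compatible with it'' and that the contradiction ``is to be extracted'' from rank and kernel constraints --- but you never extract it. As written, everything after the displayed identity is a plan, not an argument, and it is exactly the part where the APN and quadraticity hypotheses must do real work.

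There are also two concrete reasons to doubt the plan as stated. First, you are aiming at a statement stronger than the theorem: the theorem asserts that \emph{some} EA witness exists, not that the \emph{given} CCZ witness $\sigma$ has $L_2=0$. If a quadratic APN function admitted a CCZ self-equivalence of its own graph with $L_2\neq 0$, your target claim would be false while the theorem remained true; so at minimum you would need to argue that $\sigma$ can be corrected by composing with a graph automorphism, and nothing in your identity addresses this. Second, your identity does not visibly use that $F_1$ is a permutation or that the full $2n\times 2n$ linear part is invertible, and by itself it is consistent with various low-rank choices of $L_2$; the rank-$(n-1)$ rigidity of $x\mapsto B_F(x,a)$ has to be combined with those global constraints in a genuinely nontrivial way that you have not supplied. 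Yoshiara's actual proof takes a different route that sidesteps this computation: to a quadratic APN function $F$ one attaches the elementary abelian group of affine maps $(x,y)\mapsto\bigl(x+a,\;y+B_F(x,a)+F(a)+F(0)\bigr)$, which acts regularly on $\Gamma_F$ precisely because $F$ is quadratic; a group-theoretic argument shows this subgroup is canonical inside the automorphism group of the associated code, so any CCZ-equivalence must conjugate the one for $F$ to the one for $G$, and this forces the equivalence to be extended-affine. That detour through the automorphism group is the leverage your direct second-derivative approach is missing.
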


Therefore, since our focus is on quadratic APN functions, we are going to separate the functions we find up to EA-equivalence (which corresponds to CCZ-equivalence in this case) and we only provide a single representative from each CCZ-equivalence class. Such a CCZ-equivalence class will also be called an \emph{instance} throughout this paper, which is represented by one member of the class.  

The \emph{LE-automorphism} group (see~\cite{beierle2020}) of a function $F \colon \F_2^n \rightarrow \F_2^n$ is defined as
\[ \AutLE(F) \coloneqq \left\{ \diag(A,B) \in \GL(2n,\F_2) \mid  A,B \in \GL(n,\F_2) \text{ and } F\circ A = B \circ F \right\}.\]

If a function $F$ admits a non-trivial automorphism $\diag(A,B) \in \AutLE(F)$, we also say that $F$ is \emph{linearly self-equivalent} with respect to the tuple $(B,A)$. 
Note that if $F\colon \F_2^n \rightarrow \F_2^n$ and $G\colon \F_2^n \rightarrow \F_2^n$ are linear-equivalent, we have $\AutLE(F) \cong \AutLE(G)$, where $\cong$ denotes the isomorphism relation of groups. 

\section{A Recursive Tree Search for Quadratic APN Functions}
Our idea for finding new instances of APN functions is to apply a recursive tree search very similar to~\cite[Algorithm 1]{beierle2020}. The main differences are that in the algorithm given here we drop the restriction to only search for permutations and we apply an additional filter for skipping branches within the search tree that lead to functions of algebraic degree higher than two. In this section, we explain the most basic variant that searches for arbitrary $n$-bit quadratic APN functions for $n\geq 2$. It is formally specified in Algorithm~\ref{alg:search}. The algorithm can easily be adapted to only search for quadratic APN functions admitting a particular LE-automorphism (see Section~\ref{sec:search_le_aut}). 

The global array $\mathrm{sbox}$ is initialized to be undefined ($\bot$) at each entry. This array corresponds to the look-up table of the APN function $F$ to be constructed. In each iteration of  $\textsc{NextVal}$, the procedure $\textsc{isComplete}$ first checks whether $\mathrm{sbox}$ is already completely defined. If this is the case, the algorithm has found a quadratic APN function and prints $\mathrm{sbox}$ as the solution. Otherwise, the procedure $\textsc{NextFreePosition}$ is called which selects the next undefined entry $x$ and sets $F(x)$ to a value $y$ that is randomly selected from among a predefined list of possible choices. The orders in which those values $y$ are selected at each depth are determined by the $\textsc{Shuffle}$ procedure that is performed in the beginning. After fixing $F(x)$, the procedure $\textsc{AddPoint}$ checks whether $F$ can still be both APN and quadratic. If not, the current branch of the search tree is skipped and $x$ is set to the next possible value $y$. In case no contradiction to either property is encountered, the algorithm goes one level deeper.

Note that, since the running time of Algorithm~\ref{alg:search} can be very long in cases where no quadratic APN function is found, we abort and restart after a predetermined amount of time (e.g., 10 seconds for $n=7$).

\subsection{APN Check} Each time we set $F(x)$ to a new value $y$, we need to check that the APN property of $F$ has not already been violated. This is performed in exactly the same way as in~\cite{beierle2020}. In particular, the function $\textsc{addDDTInformation}{(x)}$ dynamically changes the DDT according to the value set by the current iteration. The DDT is stored in a global array which is initialized to 0 before calling~Algorithm \ref{alg:search}. Similarly, each time we reset $F(x)$ to $\bot$, the function $\textsc{removeDDTInformation}{(x)}$ applies the appropriate changes to the DDT. Note that $\textsc{addDDTInformation}{(x)}$ returns 1 if the APN property is not violated by fixing $F(x)$. Otherwise, it returns 0. Similarly, $\textsc{removeDDTInformation}{(x)}$ returns 1 if the APN property has not been violated by fixing $F(x)$. Otherwise, it returns 0.

\subsection{Algebraic Degree Check} Each time we set $F(x)$ to a new value $y$, we check whether we can deduce the existence of a monomial of algebraic degree higher than $2$ in the algebraic normal form of $F$. For this, using Equation (\ref{eqn:moebius}), we keep track of the partial sums for all $a_u$ with $\wt(u) \geq 3$ in a global array $\mathrm{sum}$ and update them whenever $x\preceq u$. The check is performed by calling the function  $\textsc{addDegreeInformation}{(x)}$, which is defined below. The function dynamically changes the global arrays $\mathrm{ctr}$ and $\mathrm{sum}$, both of size $2^n$, which are initialized to 0 before calling~Algorithm \ref{alg:search}. Below, the symbol $\oplus$ denotes the bitwise XOR operation in order to distinguish it from the addition of integers, denoted $+$.

\begin{algorithmic}[1]
		\Function{addDegreeInformation}{x}
		\For {$u \in [1,\dots,2^n-1]$ such that $\wt(u)\geq 3$ and $x \preceq u$}
		    \State $\mathrm{ctr}[u] \gets \mathrm{ctr}[u]+1$
		    \State $\mathrm{sum}[u] \gets \mathrm{sum}[u] \oplus \mathrm{sbox}[x]$
		    \If {$\mathrm{ctr}[u] = 2^{\wt(u)}$} \Comment{All $x$ with $x \preceq u$ have been considered}
		        \If{$\mathrm{sum}[u] \neq 0$} \Comment{We have $a_u \ne 0$ in the ANF of $F$}
		            \State \Return 0
		        \EndIf
		    \EndIf
		\EndFor
		\State \Return 1
		\EndFunction
\end{algorithmic}

When $F(x)$ is reset to $\bot$, we need to restore the values of the arrays $\mathrm{ctr}$ and $\mathrm{sum}$ by calling the following procedure.

\begin{algorithmic}[1]
		\Function{removeDegreeInformation}{x}
		\For {$u \in [1,\dots,2^n-1]$ such that $\wt(u)\geq 3$ and $x \preceq u$}
		    \State $\mathrm{ctr}[u] \gets \mathrm{ctr}[u]-1$
		    \State $\mathrm{sum}[u] \gets \mathrm{sum}[u] \oplus \mathrm{sbox}[x]$
		    \If {$\mathrm{ctr}[u] = 2^{\wt(u)}-1$}
		        \If{$\mathrm{sum}[u] \neq \mathrm{sbox}[x]$}
		            \State \Return 0
		        \EndIf
		    \EndIf
		\EndFor
		\State \Return 1
		\EndFunction
\end{algorithmic}

\renewcommand{\algorithmicrequire}{\textbf{Input:}}
\renewcommand{\algorithmicensure}{\textbf{Output:}}
\algnewcommand{\algorithmicgoto}{\textbf{go to}}
\algnewcommand{\Goto}[1]{\algorithmicgoto~\ref{#1}}
\begin{algorithm}
	\caption{\textsc{QuadraticAPNSearch}} \label{alg:search}
	\begin{algorithmic}[1]
		\Require Global array $\mathrm{sbox}$ of size $2^n$, initialized to $\mathrm{sbox}[i] = \bot$, for all $i \in \{0,\dots,2^n-1\}$. Global 2-dimensional array $P$ of size $2^n \times 2^n$ with each $P[i]$ initialized to $[0,\dots,2^n-1]$ for all $i \in \{0,\dots,2^n-1\}$.
		\Ensure Prints an $n$-bit APN function $F$ of algebraic degree at most two.
		    \vspace{.1em}
		    \For {$i \in [0,\dots,2^n-1]$}
		        \State $\textsc{Shuffle}(P[i])$ \Comment{Generates a random permutation of $P[i]$}
		    \EndFor
		    \State $\mathrm{sbox}[0] \gets 0$
		    \State $\textsc{addPoint}(0)$
		    \State $\textsc{nextVal}(0)$
            \vspace{1em}
		\Function{nextVal}{depth}
		    \If{$\textsc{isComplete}(\mathrm{sbox})$} \Comment{Checks if $\mathrm{sbox}$ contains no $\bot$}
		        \State Print $\mathrm{sbox}$ and terminate
		    \EndIf
		    \State $x \gets \textsc{nextFreePosition}()$ \Comment{Chooses the smallest $i$ s.t. $\mathrm{sbox}[i] \neq \bot$}
		    \For{$z \in [0,\dots,2^n-1]$}
		                \State $y \gets P[\mathrm{depth}][z]$
		                \State $\mathrm{sbox}[x] \gets y$
		                \State $b \gets\textsc{addPoint}(x) $
		                \If{$b$ is equal to $1$}
		                    \State $\textsc{nextVal}(\mathrm{depth}+1)$
		                \EndIf
		               \State $\mathrm{sbox}[x] \gets \bot$ 
		               \State $\textsc{removePoint}(x)$
		    \EndFor
		\EndFunction
		\vspace{1em}
		\Function{addPoint}{c}
		\If{$\textsc{addDDTInformation}(c)$}
		    \State \Return $\textsc{addDegreeInformation}(c)$
		\EndIf
		\State \Return 0
		\EndFunction
		
		\vspace{1em}
		\Function{removePoint}{c}
		\If{$\textsc{removeDDTInformation}(c)$}
		    \State $\textsc{removeDegreeInformation}(c)$
		\EndIf
		\EndFunction
	\end{algorithmic}
\end{algorithm}

\subsection{EA-equivalence Check}
For each function that we find, we need to check whether it is EA-equivalent to a known instance. We recall that for two quadratic APN functions, EA-equivalence coincides with CCZ-equivalence. To perform the check efficiently, we use the following method by Canteaut et al., first explained in an invited talk at Boolean Functions and their Applications (BFA) 2020 and formally described in the recent preprint~\cite{DBLP:journals/corr/abs-2103-00078}.

\begin{proposition}[\cite{ortho_derivative,DBLP:journals/corr/abs-2103-00078}]
Let $F \colon \F_2^n \rightarrow \F_2^n$ be a quadratic APN function. The \emph{ortho-derivative} of $F$ is defined as the unique function $\Pi_F \colon \F_2^n \rightarrow \F_2^n$ with $\Pi_F(0) = 0$ such that, for all $\alpha \in \F_2^n \setminus \{0\}$, we have $\Pi_F(\alpha) \neq 0$ and
\[ \forall x \in \F_2^n \colon \langle \Pi_F(\alpha), (F(x)+F(x+\alpha)+F(\alpha)+F(0)\rangle = 0.\]

For two EA-equivalent quadratic APN functions $F,G \colon \F_2^n \rightarrow \F_2^n$, the ortho-derivatives $\Pi_F$ and $\Pi_G$ are linear-equivalent.
\end{proposition}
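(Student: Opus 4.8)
The plan is to observe that the ortho-derivative depends only on the symmetric bilinear form attached to the quadratic $F$, and then to track how this bilinear form, and consequently $\Pi_F$, transforms under each of the elementary operations that generate EA-equivalence.

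First I would associate to $F$ the symmetric bilinear map $B_F(x,y) \coloneqq F(x+y) + F(x) + F(y) + F(0)$. A direct expansion shows that $F(x) + F(x+\alpha) + F(\alpha) + F(0) = B_F(x,\alpha)$, so the defining condition of the ortho-derivative reads $\langle \Pi_F(\alpha), B_F(x,\alpha)\rangle = 0$ for all $x$. Since $B_F(\cdot,\alpha)$ is linear in its first argument, this says precisely that $\Pi_F(\alpha)$ spans the orthogonal complement of the image $V_\alpha \coloneqq \{B_F(x,\alpha) \mid x \in \F_2^n\}$. The APN property guarantees that the kernel of $B_F(\cdot,\alpha)$ equals $\{0,\alpha\}$, so $V_\alpha$ has dimension $n-1$ and its orthogonal complement is one-dimensional; this is what makes $\Pi_F(\alpha)$ unique and nonzero, and it is the reformulation I would use throughout.

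Next I would compute the behaviour of $B_F$, and hence of $\Pi_F$, under the three generators of EA-equivalence, writing each affine map as its linear part plus a constant. The affine summand $C$ contributes nothing to $B_F$, since affine maps vanish under the second difference, so $\Pi_{F+C} = \Pi_F$. For precomposition by $A$ with linear part $A_0$, carefully expanding and cancelling the shift terms yields $B_{F\circ A}(x,y) = B_F(A_0 x, A_0 y)$; as $A_0$ is invertible, the image $V_\alpha$ for $F\circ A$ equals $V_{A_0\alpha}$ for $F$, whence $\Pi_{F\circ A} = \Pi_F \circ A_0$. For postcomposition by $B$ with linear part $B_0$ one gets $B_{B\circ F} = B_0 \circ B_F$, and rewriting $\langle w, B_0 v\rangle = \langle B_0^\top w, v\rangle$ shows that $B_0^\top \Pi_{B\circ F}(\alpha)$ lies in the orthogonal complement of $V_\alpha$ for $F$; being nonzero, it must equal $\Pi_F(\alpha)$, so $\Pi_{B\circ F} = (B_0^{-1})^\top \circ \Pi_F$.

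Finally I would combine these facts. Starting from $F \circ A = B \circ G + C$ with linear parts $A_0$ and $B_0$, the first two computations give $\Pi_{F\circ A} = \Pi_F \circ A_0$, while the last two give $\Pi_{B\circ G + C} = \Pi_{B\circ G} = (B_0^{-1})^\top \circ \Pi_G$. Equating the two sides yields $\Pi_F \circ A_0 = (B_0^{-1})^\top \circ \Pi_G$, which is exactly a linear equivalence between $\Pi_F$ and $\Pi_G$ in the sense defined earlier, with invertible maps $A_0$ and $(B_0^{-1})^\top$. I expect the main obstacle to be the bookkeeping in the postcomposition step, where one has to pass the invertible linear map through the inner product via its transpose and then argue that the resulting nonzero vector must coincide with the generator of a one-dimensional orthogonal complement. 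Keeping track of the affine shifts in the precomposition expansion, and confirming that they cancel completely to leave a genuinely bilinear expression, is the other point that needs care.
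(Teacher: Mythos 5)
The paper does not prove this proposition; it is imported verbatim from the cited works of Canteaut--Perrin and Canteaut--Couvreur--Perrin, so there is no in-paper argument to compare against. Your proof is correct and self-contained: the reduction of the ortho-derivative to the one-dimensional orthogonal complement of the hyperplane $V_\alpha = \mathrm{Im}\,B_F(\cdot,\alpha)$ (where APN-ness forces $\ker B_F(\cdot,\alpha)=\{0,\alpha\}$), the three transformation rules $\Pi_{F+C}=\Pi_F$, $\Pi_{F\circ A}=\Pi_F\circ A_0$ and $\Pi_{B\circ F}=(B_0^{\top})^{-1}\circ\Pi_F$, and the final identity $\Pi_F\circ A_0=(B_0^{\top})^{-1}\circ\Pi_G$ all check out, and this is essentially the standard argument from the cited reference. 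The one step you rightly flag as needing care --- that the translation part of $A$ cancels --- is exactly where quadraticity enters, via the constancy of the second derivative $D_uD_vF$.
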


Testing two quadratic APN functions for EA-inequivalence (which is the same as CCZ-in\-equiv\-a\-lence in this case) is now fairly simple. One simply computes the corresponding ortho-derivatives and evaluates their extended Walsh spectra and differential spectra. We note that these are (strongly discriminating) invariants for EA-equivalence. If the two extended Walsh spectra or the two differential spectra are different, the ortho-derivatives cannot be linear-equivalent and hence, the two quadratic APN functions cannot be EA-equivalent. The implementation for computing the ortho-derivative is contained in the latest version of sboxU~\cite{Perrin_sboxu}. This method is much more efficient than checking the code equivalence with Magma~\cite{MR1484478}.

 It might be the case that two EA-inequivalent functions are are not identified as such because their ortho-derivatives might have identical differential and extended Walsh spectra. However, this does not seem to occur often. For example, all of the previously-known 8,191 quadratic 8-bit APN instances can be established to be EA-inequivalent by this method. The complete check for all of those 8,191 APN instances only takes a few minutes on a PC.
 
 We remark that no further effort is needed to test the quadratic APN functions that we find for CCZ-inequivalence to the non-quadratic monomial APN functions. This is because of the fact that a quadratic APN function CCZ-equivalent to a monomial function must be EA-equivalent to a quadratic monomial function~\cite{yoshiara2016equivalences}.

\subsection{Results}
After running the search for $n=7$ for about 72 CPU hours, we found most of the quadratic APN instances, including the recently discovered APN function presented in~\cite{cryptoeprint:2020:1113}. For higher values of $n$, this direct approach is not very efficient and so we consider linear self-equivalences in the following.

\section{Considering LE-Automorphisms}
We now describe the method for searching for quadratic APN functions with non-trivial LE-automorphisms in small dimension $n$ and apply it to $n\in \{7,8,9,10\}$. Again, the algorithm is similar to the one presented in~\cite{beierle2020} where the focus was on APN permutations of arbitrary algebraic degree.
\subsection{Canonical classes of LE-Automorphisms}
If we consider $n$-bit functions $F$ with non-trivial elements in $\AutLE(F)$ and are only interested in a classification of such $F$ up to linear-equivalence,\footnote{We cannot use EA-equivalence here because the property of admitting a non-trivial LE-automorphism is not invariant under EA-equivalence.} we can significantly reduce the number of matrix pairs $(B,A)$ that we need to consider.  The following ideas and reductions have already been presented in~\cite{beierle2020} with a focus on the case of $F$ being a permutation.

\begin{lemma}[\cite{beierle2020}]
\label{lem:prime_order}
Let $F \colon \F_2^n \rightarrow \F_2^n$ have a non-trivial automorphism in $\AutLE(F)$. There exist $A,B \in \GL(n,\F_2)$ with $F \circ A= B \circ F$  such that either 
\begin{enumerate}
    \item $\ord(A)=\ord(B)=p$ for $p$ prime, or
    \item $B = I_n$ and $\ord(A) = p$ for $p$ prime, or
    \item $A = I_n$ and $\ord(B) = p$ for $p$ prime.
\end{enumerate}
\end{lemma}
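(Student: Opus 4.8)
The plan is to start from an arbitrary non-trivial automorphism $\diag(A_0,B_0) \in \AutLE(F)$ and reduce it to one of the three claimed canonical forms by passing to suitable powers. The governing relation $F \circ A_0 = B_0 \circ F$ has a crucial feature: it is multiplicatively stable in the sense that iterating it yields $F \circ A_0^k = B_0^k \circ F$ for every non-negative integer $k$. This is proved by a trivial induction, and it is the engine behind the whole argument, because it means that \emph{any} power $(A_0^k, B_0^k)$ is again a valid automorphism pair. So the task reduces to choosing the exponent $k$ cleverly so that $(A_0^k, B_0^k)$ lands in one of the three cases, while ensuring the resulting pair is still non-trivial (i.e.\ not $(I_n, I_n)$).

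First I would dispose of the degenerate possibility that $A_0 = I_n$ and $B_0 = I_n$ simultaneously; since $\diag(A_0,B_0)$ is non-trivial by hypothesis, at least one of $A_0, B_0$ differs from $I_n$. Now let $m = \lcm(\ord(A_0), \ord(B_0))$ be the order of the element $\diag(A_0,B_0)$ in $\GL(2n,\F_2)$, and pick any prime $p$ dividing $m$. Set $k = m/p$ and consider the pair $(A, B) = (A_0^k, B_0^k)$. Then $\diag(A,B) = \diag(A_0,B_0)^k$ has order exactly $p$ in $\GL(2n,\F_2)$ (a standard fact: raising an element of order $m$ to the power $m/p$ gives an element of order $p$), so $\diag(A,B)$ is a non-trivial automorphism of prime order $p$. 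Because the order of $\diag(A,B)$ is the lcm of $\ord(A)$ and $\ord(B)$, and that lcm equals the prime $p$, each of $\ord(A)$ and $\ord(B)$ must individually divide $p$, hence each is either $1$ or $p$.

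This leaves exactly three situations, which correspond precisely to the three cases of the lemma. If $\ord(A) = \ord(B) = p$, we are in case~(1). If $\ord(A) = p$ and $\ord(B) = 1$, then $B = I_n$ and we are in case~(2). If $\ord(A) = 1$ and $\ord(B) = p$, then $A = I_n$ and we are in case~(3). The case $\ord(A) = \ord(B) = 1$ cannot occur, since it would force $\diag(A,B) = I_{2n}$, contradicting that its order is the prime $p$. Throughout, the automorphism relation $F \circ A = B \circ F$ continues to hold for the chosen power by the stability observation above, so $(A,B)$ is a genuine witnessing pair. This completes the reduction.

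The argument is essentially elementary once the stability of the relation under taking powers is in hand; the only point requiring care is the elementary group-theoretic claim that $\ord\bigl(\diag(A_0,B_0)\bigr) = \lcm(\ord(A_0),\ord(B_0))$ and that powering by $m/p$ produces an element of order exactly $p$. I expect the main (though still modest) obstacle to be bookkeeping around non-triviality: one must verify that the descent to a prime-order power never collapses the pair to the identity, which is guaranteed here precisely because $p$ is chosen to divide the order $m$ of the original non-trivial element, forcing the power to retain order $p > 1$.
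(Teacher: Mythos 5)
Your proof is correct. The paper itself does not reproduce a proof of this lemma but cites it from \cite{beierle2020}, and your argument---iterating $F \circ A_0 = B_0 \circ F$ to get $F \circ A_0^k = B_0^k \circ F$, raising the non-trivial automorphism to the power $m/p$ to obtain an element of prime order $p$, and then reading off the three cases from whether each block has order $1$ or $p$---is precisely the standard argument given in that reference.
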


Two elements $M,M' \in \GL(n,\F_2)$ are called \emph{similar}, denoted $M \sim M'$, if there exists  $P \in \GL(n,\F_2)$ such that $M' = P^{-1} M P$. Similarity is an equivalence relation on $\GL(n,\F_2)$ and we can find a representative of each equivalence class by the so-called rational canonical form.

\begin{lemma}{(Rational Canonical Form)\cite[Page~476]{dummit1991}}
\label{lem:rcf}
Every element $M \in \GL(n,\F_2)$ is similar to a unique  $M' \in \GL(n,\F_2)$ of the form $M' = \diag(\companion(q_r),\companion(q_{r-1}),\dots,\companion(q_1))$
for polynomials $q_i$ such that $q_r \mid q_{r-1} \mid \dots \mid q_1$. This matrix $M'$ is called the \emph{rational canonical form} of $M$, denoted $\rcf(M)$.
\end{lemma}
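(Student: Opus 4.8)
The plan is to recast the similarity problem in module-theoretic language and to invoke the structure theorem for finitely generated modules over a principal ideal domain. First I would turn $V = \F_2^n$ into a module over the polynomial ring $\F_2[X]$ by letting $X$ act as $M$, that is, $X \cdot v \coloneqq Mv$, extended $\F_2$-linearly. Two matrices are similar precisely when the two resulting $\F_2[X]$-module structures on $V$ are isomorphic (a change-of-basis matrix $P$ with $M' = P^{-1}MP$ is exactly an $\F_2[X]$-module isomorphism), so the entire statement becomes one about the isomorphism type of this module. Since $\F_2$ is a field, $\F_2[X]$ is a PID; and since $V$ is finite-dimensional over $\F_2$, it is a finitely generated $\F_2[X]$-module that is moreover torsion, as the minimal polynomial of $M$ annihilates every element.

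Next I would apply the invariant-factor form of the structure theorem to obtain an isomorphism
\[ V \cong \bigoplus_{i=1}^{r} \F_2[X]/(q_i), \]
where the $q_i$ are monic nonconstant polynomials satisfying the divisibility chain $q_r \mid q_{r-1} \mid \dots \mid q_1$, and where these invariant factors are uniquely determined by the module. The remaining computation is local: on a single cyclic summand $\F_2[X]/(q)$ with $\deg q = d$, the ordered basis $\{1, X, \dots, X^{d-1}\}$ is sent by multiplication-by-$X$ (i.e., by $M$) to $\{X, X^2, \dots, X^{d-1}, X^d\}$, and reducing $X^d$ modulo $q$ reproduces exactly the entries of the last column of $\companion(q)$. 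Hence the matrix of $M$ on this summand in this basis is $\companion(q)$. Assembling the summands in the order dictated by the divisibility chain yields $\diag(\companion(q_r), \dots, \companion(q_1))$, establishing existence, while uniqueness of $M'$ is inherited directly from the uniqueness of the invariant factors.

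Finally I would account for the $\GL$-hypothesis. Because $M \in \GL(n,\F_2)$ is invertible, $0$ is not a root of its minimal polynomial, so $X \nmid q_i$ for every $i$; thus each $q_i$ has nonzero constant term, i.e., constant term $1$ over $\F_2$. As already recorded in the paper, this is precisely the condition under which $\companion(q_i)$ lies in $\GL$, whence $M' \in \GL(n,\F_2)$, so the canonical form stays inside the group as claimed.

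I expect the main obstacle to be the structure theorem over a PID itself, and specifically the \emph{uniqueness} of the invariant-factor decomposition, which is the one genuinely substantive input. If a self-contained route is preferred, I would instead prove existence and uniqueness simultaneously by computing the Smith normal form of the characteristic matrix $X I_n - M$ over $\F_2[X]$: its nonunit diagonal entries are exactly the invariant factors $q_i$, and the uniqueness of the Smith normal form (up to units, pinned down here by the monic normalization) delivers uniqueness of the $q_i$. Everything else is the routine translation between cyclic modules and companion matrices, which is why the authors simply cite~\cite[Page~476]{dummit1991} rather than reproduce it.
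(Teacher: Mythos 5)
Your proof is correct and is exactly the standard argument the paper implicitly relies on: the paper offers no proof of its own, citing Dummit--Foote, and your route via the $\F_2[X]$-module structure on $\F_2^n$, the invariant-factor form of the structure theorem over a PID, and the companion-matrix description of cyclic summands is precisely the argument found there. Your added observation that invertibility of $M$ forces each $q_i$ to have constant term $1$ (so each $\companion(q_i)$, and hence $M'$, lies in $\GL(n,\F_2)$) correctly handles the one point where the lemma as stated goes slightly beyond the bare structure theorem.
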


If we want to collect all LE-automorphisms in order to classify all functions $F \colon \F_2^n \rightarrow \F_2^n$ that admit a non-trivial linear self-equivalence up to linear-equivalence, we can without loss of generality assume that $A$ and $B$ are in rational canonical form. The following definition and lemma allow us to reduce the search space even further.
\begin{definition}[\cite{beierle2020}]
Let $A,B,C,D \in \GL(n,\F_2)$ be of order $p$ for $p$ prime. The tuple $(A,B)$ is said to be \emph{power-similar} to the tuple $(C,D)$, denoted $(A,B) \sim_p (C,D)$, if there exists a positive integer $i$ such that $A \sim C^i$ and $B \sim D^i$. 
\end{definition}
Power-similarity  defines an equivalence relation on the ordered pairs of matrices in $\GL(n,\F_2^n)$ of the same prime order and the following holds.

\begin{lemma}[\cite{beierle2020}]
\label{lem:power_similarity}
Let $F \colon \F_2^n \rightarrow \F_2^n$ with $\diag(A,B) \in \AutLE(F)$ for $A, B\in \GL(n,\F_2)$ being of prime order $p$. For every $(\widetilde{B},\widetilde{A})$ power-similar to $(B,A)$, there is a function $G$ linear-equivalent to $F$ such that $\diag(\rcf(\widetilde{A}),\rcf(\widetilde{B})) \in \AutLE(G)$. 
\end{lemma}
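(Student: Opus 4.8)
The plan is to decompose the statement into two elementary closure properties of the LE-automorphism group and then combine them, using the primality of $p$ to handle the exponent hidden in the definition of power-similarity. Concretely, I would separate the passage from $(B,A)$ to $(\widetilde B,\widetilde A)$ into (i) raising the pair to a suitable power and (ii) conjugating each component independently via a linear-equivalence that brings the matrices into rational canonical form.

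First I would show that $\AutLE(F)$ is closed under taking powers. Starting from $F \circ A = B \circ F$, an immediate induction yields $F \circ A^i = B^i \circ F$ for every $i \geq 1$, since $F \circ A^{i+1} = (B^i \circ F)\circ A = B^i \circ (B \circ F) = B^{i+1}\circ F$. Thus $\diag(A^i,B^i) \in \AutLE(F)$ for all $i$. Next I would record how a linear-equivalence transports an automorphism: if $G = Q^{-1}\circ F \circ P$ with $P,Q \in \GL(n,\F_2)$ (so that $F \circ P = Q \circ G$, i.e.\ $G$ is linear-equivalent to $F$) and $\diag(M,N)\in\AutLE(F)$, then
\[ G\circ(P^{-1}MP) = Q^{-1}FPP^{-1}MP = Q^{-1}FMP = Q^{-1}NFP = (Q^{-1}NQ)\circ G, \]
so $\diag(P^{-1}MP,\,Q^{-1}NQ)\in\AutLE(G)$. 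The crucial feature is that $P$ and $Q$ are free and independent, so the two components $M$ and $N$ can be conjugated separately.

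I would then assemble these as follows. By power-similarity of $(\widetilde B,\widetilde A)$ and $(B,A)$, there is an $i$ with $A\sim\widetilde A^{\,i}$ and $B\sim\widetilde B^{\,i}$; since $\widetilde A,\widetilde B$ have order $p$, only $i \bmod p$ matters and we may take $i\in\{1,\dots,p-1\}$. As $p$ is prime, $i$ is invertible modulo $p$, so choosing $j$ with $ij\equiv 1\pmod p$ and using $\widetilde A^{\,p}=I_n$ gives $\widetilde A = (\widetilde A^{\,i})^{j}\sim A^{j}$, and likewise $\widetilde B\sim B^{j}$. Similar matrices have the same rational canonical form, so $\rcf(\widetilde A)=\rcf(A^{j})$ and $\rcf(\widetilde B)=\rcf(B^{j})$. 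Now pick $P,Q$ with $P^{-1}A^{j}P=\rcf(A^{j})$ and $Q^{-1}B^{j}Q=\rcf(B^{j})$ and set $G=Q^{-1}\circ F\circ P$. The power step gives $\diag(A^{j},B^{j})\in\AutLE(F)$, and the transport step then gives $\diag(\rcf(A^{j}),\rcf(B^{j}))=\diag(\rcf(\widetilde A),\rcf(\widetilde B))\in\AutLE(G)$, as required.

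The only genuinely delicate point is the exponent bookkeeping: power-similarity only tells us that $A$ and $B$ agree with $i$-th powers of $\widetilde A$ and $\widetilde B$, and one must recover $\widetilde A$ and $\widetilde B$ \emph{themselves}. This is exactly where primality of $p$ is indispensable, via the modular inverse $j=i^{-1}\bmod p$ together with the fact that prime order forces $\widetilde A^{\,ij}=\widetilde A$; once $j$ is in hand, the remaining steps are routine verifications of the two closure properties and of the defining equation of $G$.
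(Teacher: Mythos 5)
Your proof is correct. Note that the paper itself gives no proof of this lemma---it is quoted from the earlier work \cite{beierle2020}---so there is nothing internal to compare against; your argument (closure of $\AutLE(F)$ under powers, independent conjugation of the two blocks via a linear equivalence, and inversion of the exponent modulo the prime $p$ to recover $\widetilde A$ and $\widetilde B$ from $A^i$ and $B^i$) is exactly the natural route and correctly identifies primality as the essential ingredient. One tiny point you leave implicit: the case $i\equiv 0 \pmod p$ is excluded because $A\sim \widetilde A^{\,i}=I_n$ would contradict $\ord(A)=p>1$; it is worth saying so explicitly.
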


For the dimensions that we consider in this work, i.e., $n\leq 10$, we can efficiently generate all rational canonical forms as block-diagonal matrices of the form given in Lemma~\ref{lem:rcf}. By applying the above Lemmas~\ref{lem:prime_order} and~\ref{lem:power_similarity} to a fixed dimension $n$, we obtain a reduced number of tuples $(B,A)$ to consider. We call those \emph{canonical classes} of LE-automorphisms.

\paragraph{$n=7$}
All linear-equivalence classes of functions $F \colon \F_2^n \rightarrow \F_2^n$ that admit a non-trivial linear self-equivalence
can be obtained by considering 128 canonical classes for pairs $(B,A)$: 56 classes with $\ord(A)=\ord(B)$ being prime, 36 classes with $B = I_n$ and $\ord(A)$ being prime, and 36 classes with $A = I_n$ and $\ord(B)$ being prime. 

\paragraph{$n=8$}
All linear-equivalence classes of functions $F \colon \F_2^n \rightarrow \F_2^n$ that admit a non-trivial linear self-equivalence
can be obtained by considering 157 canonical classes for pairs $(B,A)$: 75 classes with $\ord(A)=\ord(B)$ being prime, 41 classes with $B = I_n$ and $\ord(A)$ being prime, and 41 classes with $A = I_n$ and $\ord(B)$ being prime. 

\paragraph{$n=9$}
All linear-equivalence classes of functions $F \colon \F_2^n \rightarrow \F_2^n$ that admit a non-trivial linear self-equivalence
can be obtained by considering 217 canonical classes for pairs $(B,A)$: 111 classes with $\ord(A)=\ord(B)$ being prime, 53 classes with $B = I_n$ and $\ord(A)$ being prime, and 53 classes with $A = I_n$ and $\ord(B)$ being prime. 

\paragraph{$n=10$}
All linear-equivalence classes of functions $F \colon \F_2^n \rightarrow \F_2^n$ that admit a non-trivial linear self-equivalence
can be obtained by considering 401 canonical classes for pairs $(B,A)$: 247 classes with $\ord(A)=\ord(B)$ being prime, 77 classes with $B = I_n$ and $\ord(A)$ being prime, and 77 classes with $A = I_n$ and $\ord(B)$ being prime.

The tuples $(B,A)$ and their corresponding class indices  can be found along with our source code at~\cite{our_code}.

\subsection{Finding quadratic APN functions in the canonical classes}
\label{sec:search_le_aut}

If we are searching for APN functions (not necessarily quadratic) with non-trivial LE-auto\-mor\-phisms, we do not have to check all of the canonical classes, as outlined in the following two lemmas.
For a matrix $M \in \GL(n,\F_2)$, let us denote by $\mathrm{Fix}_M \coloneqq \{ x \in \F_2^n \mid Mx = x\}$ the set of fixed points of $M$, which is a linear subspace of $\F_2^n$.

\begin{lemma}
\label{lem:admissible1}
Let $A,B \in \GL(n,\F_2)$. If $B$ has strictly less fixed points than $A$ and $(|\mathrm{Fix}_B|,|\mathrm{Fix}_A|) \notin \{(1,2),(2,4)\}$, no function $F\colon \F_2^n \rightarrow \F_2^n$ with $F \circ A = B \circ F$ can be APN. 
\end{lemma}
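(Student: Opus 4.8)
The plan is to exploit the fact that the relation $F \circ A = B \circ F$ forces $F$ to map the fixed space of $A$ into the fixed space of $B$. Writing $V \coloneqq \mathrm{Fix}_A$ and $W \coloneqq \mathrm{Fix}_B$, observe that if $x \in V$ then $F(x) = F(Ax) = B\,F(x)$, so $F(x) \in W$; hence the restriction $h \coloneqq F|_V$ is a well-defined function $h \colon V \to W$ between these two $\F_2$-subspaces. I would then show that, if $F$ is APN, the pair $(|W|,|V|)$ is forced into a very short list, and that the only surviving possibilities with $|W| < |V|$ are exactly $(1,2)$ and $(2,4)$.

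First I would record the elementary counting step. Fix any $a \in V \setminus \{0\}$; then $x+a \in V$ for all $x \in V$ and $F(x)+F(x+a) \in W$, so the derivative $D_a h \colon x \mapsto F(x)+F(x+a)$ maps $V$ into $W$. Since $F$ is APN, for each $b \in W$ the set $\{x \in V : D_a h(x) = b\}$ has at most two elements; summing over $b \in W$ gives $|V| = \sum_{b \in W} |D_a h^{-1}(b)| \le 2|W|$. As $|V|$ and $|W|$ are powers of two and $|W| < |V|$ by hypothesis, this forces $|V| = 2|W|$, and equality in the count then forces every fibre of $D_a h$ to have size exactly two; that is, $D_a h$ is $2$-to-$1$ onto $W$ for every $a \in V \setminus \{0\}$.

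The main step — and the source of the $(1,2)$ and $(2,4)$ exceptions — is to turn this "all nonzero derivatives are $2$-to-$1$" property into a statement about bent functions. Write $|W| = 2^m$, so $|V| = 2^{m+1}$, and identify $V \cong \F_2^{m+1}$ and $W \cong \F_2^m$. For any nonzero component $\beta$, the Boolean function $h_\beta \colon x \mapsto \langle \beta, h(x)\rangle$ satisfies $D_a h_\beta(x) = \langle \beta, D_a h(x)\rangle$; since $D_a h$ attains every value of $W$ exactly twice and $\langle \beta, \cdot\rangle$ is balanced on $W$, the derivative $D_a h_\beta$ is balanced on $V$ for every $a \neq 0$. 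Thus every nonzero component of $h$ has all of its nonzero derivatives balanced, i.e.\ $h$ is a vectorial bent (perfect nonlinear) function from $\F_2^{m+1}$ to $\F_2^m$. By Nyberg's classical bound, such functions exist only when the output dimension is at most half the input dimension, so we would need $m \le (m+1)/2$, i.e.\ $m \le 1$. Hence $|W| \le 2$, and combined with $|V| = 2|W|$ the only possibilities are $(|W|,|V|) \in \{(1,2),(2,4)\}$, precisely the excluded cases; for every other pair with $|W| < |V|$, no APN $F$ with $F \circ A = B \circ F$ exists.

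I expect the genuinely nontrivial ingredient to be the recognition that the restricted derivative condition is exactly vectorial bentness of $F|_{\mathrm{Fix}_A}$, after which the classical parameter restriction on bent functions closes the argument; the remaining steps are routine counting with the fixed-point subspaces.
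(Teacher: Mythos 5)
Your proof is correct, and its decisive step is genuinely different from the paper's. Both arguments start the same way: $F \circ A = B \circ F$ forces $F(\mathrm{Fix}_A) \subseteq \mathrm{Fix}_B$, so the restriction of $F$ to $V = \mathrm{Fix}_A$ lands in the smaller subspace $W = \mathrm{Fix}_B$ and inherits the at-most-two-solutions property for derivatives in directions $a \in V$. The paper then regards this restriction as an APN function on $\F_2^k$ whose image lies in an $\ell$-dimensional subspace, deduces that it has $2^{k-\ell}-1$ components constant equal to zero, and invokes Proposition 161 of Carlet's book to rule this out for $k \geq 3$, settling $(\ell,k)=(0,2)$ by hand. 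You instead split the analysis: the elementary fibre count $|V| \le 2|W|$ kills every case with $|W| < |V|/2$ outright, and in the boundary case $|V| = 2|W|$ you observe that all nonzero derivatives of $F|_V$ are exactly $2$-to-$1$ onto $W$, i.e.\ $F|_V$ is a perfect nonlinear (vectorial bent) function from $\F_2^{m+1}$ to $\F_2^{m}$, which Nyberg's bound $m \le (m+1)/2$ forbids unless $m \le 1$. Your route is self-contained modulo one classical result, and it makes the exceptional pairs $(1,2)$ and $(2,4)$ transparent: they are precisely the parameters at which (vectorial) bent functions exist. The paper's route is shorter once the cited proposition is granted; the two are close relatives, since the nonexistence of constant components of APN functions is itself a bentness-type obstruction. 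All intermediate claims in your write-up check out (in particular, the passage from $2$-to-$1$ derivatives to balanced derivatives of every nonzero component via the balancedness of $\langle \beta, \cdot\rangle$ on $W$ is valid).
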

\begin{proof}
Let $F$ be such that $F \circ A = B \circ F$. For all $x \in \mathrm{Fix}_A$, we have $F(x) = B(F(x))$, thus $\{F(x) \mid x \in \mathrm{Fix}_A\} \subseteq \mathrm{Fix}_B$. Since $\mathrm{Fix}_A$ and $\mathrm{Fix}_B$ are linear subspaces of $\F_2^n$, if $|\mathrm{Fix}_B| < |\mathrm{Fix}_A|$, the image of the restriction of $F$ on the subspace $\mathrm{Fix}_A$ (with $\dim \mathrm{Fix}_A = k$) is contained in a subspace of smaller dimension $\ell < k$, where $\ell = \dim \mathrm{Fix}_B$. If $F$ is APN, this would imply the existence of an APN function $H$ on $\F_2^k$ whose image set is contained in $\F_2^\ell \times \{0\}^{k-\ell}$. The APN function $H$ must therefore have a component constant and equal to zero, more precisely, it must have $2^{k-\ell}-1$ of them. According to~\cite[Proposition 161]{carlet_2021}, this is not possible, unless $k \leq 2$. Since the cases of $(\ell,k) \in \{(0,1),(1,2)\}$ are excluded in the statement of the lemma, we only need to consider the case of $(\ell,k) = (0,2)$. In this case, we have that $H \colon \F_2^2 \rightarrow \F_2^2$ is constant and equal to zero, so $H$ is not an APN function.
\end{proof}

The following lemma is a direct consequence of~\cite[Lemma 5]{DBLP:conf/acns/CarletHP17} and states that classes of the form $(I_n,A)$ only need to be considered if the cycle decomposition of $A$ (where $A$ is considered as a permutation on the set $\F_2^n$) consists of a large number of cycles. 

\begin{lemma}
\label{lem:admissible2}
Let $A \in \GL(n,\F_2)$ and let $F \colon \F_2^n \rightarrow \F_2^n$ be an APN function with $F \circ A = F$. Let us denote by $c$ be the number of distinct cycles of $A$. If $n$ is even, we have  $c \geq \frac{2^n+2}{3}$, and, if $n$ is odd, we have $c \geq \frac{2^n+1}{3}$. 
\end{lemma}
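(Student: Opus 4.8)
The plan is to reduce the statement to a lower bound on the size of the image set of an APN function, which is exactly the content of the cited result~\cite[Lemma 5]{DBLP:conf/acns/CarletHP17}. First I would unpack the hypothesis: $F \circ A = F$ gives $F(A^i x) = F(x)$ for all $i$ by induction, so $F$ is constant on every cycle of $A$ when $A$ is viewed as a permutation of the set $\F_2^n$. Consequently $F$ takes at most one value per cycle, i.e. $|\mathrm{Im}(F)| \le c$. It therefore suffices to show that any APN function satisfies $|\mathrm{Im}(F)| \ge \frac{2^n+2}{3}$ for $n$ even and $|\mathrm{Im}(F)| \ge \frac{2^n+1}{3}$ for $n$ odd; the lemma then follows from $c \ge |\mathrm{Im}(F)|$.

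To bound the image size I would run a double-counting argument on collisions. Writing $N = 2^n$ and $n_v = |F^{-1}(v)|$ for $v \in \F_2^n$, so that $\sum_v n_v = N$ with the sum supported on $\mathrm{Im}(F)$, I would count ordered pairs $(x,y)$ with $x \ne y$ and $F(x)=F(y)$ in two ways, which yields
\[ \sum_{v} n_v(n_v-1) = \sum_{a \in \F_2^n \setminus \{0\}} |\{x \mid F(x)=F(x+a)\}|. \]
Each summand on the right is the DDT entry of $F$ at $(a,0)$, hence at most $2$ since $F$ is APN; the right-hand side is thus at most $2(N-1)$, giving $\sum_v n_v^2 \le 3N - 2$. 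Applying Cauchy--Schwarz to $\sum_{v \in \mathrm{Im}(F)} n_v = N$ then gives $N^2 \le |\mathrm{Im}(F)|\cdot \sum_v n_v^2 \le |\mathrm{Im}(F)|\,(3N-2)$, so
\[ |\mathrm{Im}(F)| \ge \frac{N^2}{3N-2}. \]

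The final step, and the main obstacle, is to upgrade this fractional estimate to the exact integer bound by combining integrality with the residue of $2^n$ modulo $3$. The raw estimate $\frac{N^2}{3N-2}$ is only about $\frac{N}{3}+\frac29$, which is \emph{strictly smaller} than the targets $\frac{N+1}{3}$ and $\frac{N+2}{3}$, so the argument closes only because of a divisibility coincidence; I would verify the two elementary inequalities rather than trust the asymptotics. Since $2 \equiv -1 \pmod 3$, we have $N \equiv (-1)^n \pmod 3$. For $n$ odd, $N \equiv 2 \pmod 3$, so $(N-2)/3$ is an integer, and a one-line check shows $\frac{N^2}{3N-2} > \frac{N-2}{3}$ (equivalently $3N^2 > (N-2)(3N-2)$, i.e. $8N > 4$); as $|\mathrm{Im}(F)|$ is an integer strictly exceeding the integer $(N-2)/3$, we conclude $|\mathrm{Im}(F)| \ge \frac{N-2}{3}+1 = \frac{N+1}{3}$. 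For $n$ even, $N \equiv 1 \pmod 3$, so $(N-1)/3$ is an integer, and $\frac{N^2}{3N-2} > \frac{N-1}{3}$ (equivalently $3N^2 > (N-1)(3N-2)$, i.e. $5N > 2$) forces $|\mathrm{Im}(F)| \ge \frac{N+2}{3}$. Combining with $c \ge |\mathrm{Im}(F)|$ finishes both parities.
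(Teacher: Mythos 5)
Your proof is correct and follows essentially the same route as the paper: reduce to the image-set size via $|\mathrm{Im}(F)| \le c$ (since $F$ is constant on each cycle of $A$) and then apply the lower bound $|\mathrm{Im}(F)| \ge \left\lceil 2^{2n}/(3\cdot 2^n-2)\right\rceil$ for APN functions, evaluating the ceiling according to the parity of $n$. The only difference is that the paper simply cites this bound from \cite{DBLP:conf/acns/CarletHP17}, whereas you re-derive it inline by double counting collision pairs against the DDT column at $\beta=0$ plus Cauchy--Schwarz, and you spell out the integrality step that turns the fractional estimate into the stated bound --- both steps are correct.
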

\begin{proof}
For an $n$-bit function $F$ with $F \circ A = F$, we have for the image size $|\{F(x) \mid x \in \F_2^n\}| \leq c$. From~\cite[Lemma 5]{DBLP:conf/acns/CarletHP17} (see also~\cite{cryptoeprint:2020:1529} for a more recent discussion of this result), if $F$ is APN, we have
\[ c \geq \left\lceil \frac{2^{2n}}{3 \cdot 2^n -2}\right\rceil\;,\]
where the term on the right-hand side is equal to $\frac{2^n+2}{3}$ if $n$ is even and $\frac{2^n+1}{3}$ if $n$ is odd.
\end{proof}

For the remaining (non-trivial) canonical classes of LE-automorphisms in dimensions $n=7$ and $n=8$ not excluded by Lemmas~\ref{lem:admissible1} and~\ref{lem:admissible2} (i.e., 53 classes for $n=7$ and 67 classes for $n=8$) and for most of\footnote{We did not perform any search if $|\mathrm{Fix}_A| \geq 2^8$. For this reason, we excluded 2  and 7 of the remaining (non-trivial) canonical classes of LE-automorphisms in the search in dimension $n=9$ and $n=10$, respectively.} the remaining (non-trivial) canonical classes of LE-automorphisms in dimensions $n=9$ and $n=10$, we performed a randomized tree search for quadratic APN functions similar to the search described in Algorithm~\ref{alg:search}. The main difference is that, once we fix an element $F(x) = y$, we fix $F(A^i(x)) = B^i(y), i \in \{1,\dots,\ord_A(x)-1\}$ as well. Moreover, if $A$ and $B$ have an identical number of fixed points, we set the restriction of $F$ on $\mathrm{Fix}_A$ to an APN function before invoking the recursion. 
More precisely, let $F$ be a quadratic APN function with $F \circ A = B \circ F$, where $\dim \mathrm{Fix}_A = \dim \mathrm{Fix}_B = k$. Let further $\pi_A \colon \F_2^k \rightarrow \mathrm{Fix}_A$ and $\pi_B \colon \F_2^k \rightarrow \mathrm{Fix}_B$ be (vector space) isomorphisms. Then, there exists an APN function $G \colon \F_2^k \rightarrow \F_2^k$ of algebraic degree at most two such that, for all $x \in \F_2^k$, we have $F(\pi_A(x)) = \pi_B(G(x))$. It is not clear whether we can choose $G$ up to EA-equivalence. However, in our randomized search, we only select $G$ from a predetermined list of EA-representatives.
For example, if $k=6$, we choose one of 13 quadratic APN instances uniformly at random. We refer to our source code for the lists of EA-representatives from which we choose for each fixed $k$.
Note that our algorithm might terminate without returning a quadratic APN function although one exists. This is the case if the chosen $G$ does not yield a quadratic $n$-bit APN function, while another choice of $G$ would. However, we do not know whether this can happen.

\paragraph{Special cases} There are some special cases of classes for which we slightly modify the algorithm. Those cases are the LE-automorphisms for which $1 < |\mathrm{Fix}_A| < |\mathrm{Fix}_B|$. If $2 = |\mathrm{Fix}_A| < |\mathrm{Fix}_B|$ with $\mathrm{Fix}_A = \{0,x\}$, we set $F(x) = 0$ in the beginning before calling $\textsc{NextVal}(0)$. However, if  $2 < |\mathrm{Fix}_A| < |\mathrm{Fix}_B|$, we do not set any points in the beginning besides $F(0)=0$. Instead, we first select those positions $x$ for which $\ord_A(x) >1$  in $\textsc{NextFreePosition}()$ and the fixed points of $A$ last. This allows us to fix large cycles first. 

\paragraph{Deterministic search} It is possible to slightly modify the randomized algorithm to perform a deterministic search. In particular, we run the search for all possible choices of $k$-bit APN instances $G$ of algebraic degree at most two and do not abort after a predetermined amount of time or if a solution is found. Similarly as it was done in the exhaustive search presented in~\cite{beierle2020}, we call $\textsc{NextVal}(\mathrm{depth}+1)$ only if $F$ is the smallest representative (up to some previously-defined ordering) in the set $\{ C_B \circ F \circ C_A \mid C_A \in \mathcal{C}_A, C_B \in \mathcal{C}_B\}$, where $\mathcal{C}_A$ (resp., $\mathcal{C}_B$) is a subset of all elements in $\GL(n,\F_2)$ that commute with $A$ (resp., $B$) and whose restriction on $\mathrm{Fix}_A$ (resp., $\mathrm{Fix}_B$) is the identity. Formally, $\mathcal{C}_A \subseteq \{ M \in \GL(n,\F_2) \mid M A = A M \text{ and } \forall x \in \mathrm{Fix}_A: Mx = x\}$ (similarly for $\mathcal{C}_B$). Note that this method does not necessarily result in an exhaustive search as it is not clear whether we can select $G$ up to EA-equivalence. 

For each canonical class of LE-automorphisms, before starting the randomized search, we first check whether the deterministic search terminates in short time (i.e., a few minutes to hours). If this is the case, we do not invoke the randomized search. 

\paragraph{Results for $n=7$} The only APN functions found are those which are EA-equivalent to univariate polynomials with coefficients in $\F_2$. 
Quadratic APN polynomials with coefficients in $\F_2$ have already been classified for $n=7$, see~\cite{cryptoeprint:2019:1491}. 

\paragraph{Results for $n=8$}
To the best of our knowledge, by the time of submission of this manuscript in December 2020 and excluding our results, there are 8,192 known instances of 8-bit APN functions, i.e., the 23 instances listed in~\cite{DBLP:journals/amco/EdelP09}, the 8,157 instances constructed by the QAM method~\cite{DBLP:journals/iacr/YuWL13, DBLP:journals/dcc/YuWL14}, the 10 instances presented in~\cite{weng2013quadratic}, and the two instances from the Taniguchi family~\cite{DBLP:journals/dcc/Taniguchi19a}.
With our approach, we find 12,733 new instances of quadratic APN functions. In particular, we find quadratic APN functions within 9 different classes of LE-automorphisms (see Table~\ref{tab:n8}). For each canonical class of LE-automorphisms that could not be directly excluded by Lemma~\ref{lem:admissible1} or Lemma~\ref{lem:admissible2}, we performed the search for a few CPU days at most.\footnote{Note that for a lot of classes, the search terminates immediately without solutions.} 

The extended Walsh spectra of all of those 12,733 new instances belong to one of the six spectra $\mathcal{W}_0,\dots,\mathcal{W}_5$ listed below. By $a: m$, we indicate that the value $a$ occurs with multiplicity $m$ in the multiset. APN functions in dimension $n=8$ having $\mathcal{W}_0, \mathcal{W}_1$, or $\mathcal{W}_2$ as their extended Walsh spectrum are already known. There was no previously-known APN function with extended Walsh spectrum $\mathcal{W}_3, \mathcal{W}_4$, or $\mathcal{W}_5$. 
\begin{align*}
    \mathcal{W}_0 &= \{0: 16320, 16: 43520, 32: 5440\} \quad \text{ (classical spectrum)} \\
    \mathcal{W}_1 &= \{0: 15600, 16: 44544, 32: 5120, 64: 16\} \\
    \mathcal{W}_2 &= \{0: 14880, 16: 45568, 32: 4800, 64: 32\} \\ 
    \mathcal{W}_3 &= \{0: 14160, 16: 46592, 32: 4480, 64: 48\} \quad \text{(new)} \\
    \mathcal{W}_4 &= \{0: 13440, 16: 47616, 32: 4160, 64: 64\}  \quad \text{(new)}\\
    \mathcal{W}_5 &= \{0: 12540, 16: 48640, 32: 4096, 128: 4\}  \quad \text{(new)}\\
\end{align*}

Certainly, $\mathcal{W}_5$ is the most interesting extended Walsh spectrum since it corresponds to 8-bit functions with linearity $2^{7}$ and we found four such instances of quadratic APN functions. Whether quadratic $n$-bit APN functions with linearity $2^{n-1}$ exist was mentioned as an open problem in~\cite{DBLP:journals/tit/Carlet18}. Before now, besides the trivial cases in dimension $n \leq 4$, we only knew one such instance in dimension $n=6$, see~\cite{DBLP:journals/amco/EdelP09}. 

\begin{table}[h!]
\caption{\label{tab:n8}For all class indices (no.) in dimension 8 for which we find solutions, this table gives a lower bound on the number of distinct EA-equivalence classes of quadratic APN functions that admit the particular LE-automorphism,  separated by their extended Walsh spectra. The numbers in parentheses indicate the number of  instances that are not contained in the previously known 8,192 instances of APN functions. }
\centering
\begin{tabular}{ccccccc}
    \toprule
     no.  & $\mathcal{W}_0$ & $\mathcal{W}_1$ & $\mathcal{W}_2$ & $\mathcal{W}_3$ & $\mathcal{W}_4$ & $\mathcal{W}_5$   \\
     \midrule
     1 & 1 (0) & \\
     2 & 1 (0) & \\
     21 & 9 (9) & \\
     31 & 7 (4) & 1 (1) & & 2 (2) \\
     38 & 3 (0) & \\
     51 & 24 (20) & \\
     55 & 9,093 (9,090) & 3,065 (3,065) & 299 (297) & 146 (146) & 25 (25) & 4 (4) \\
     56 & 103 (79) & & 2 (2) & & 1 (1) & \\
     113 & 26 (0) & \\
     \bottomrule
     \end{tabular}
\end{table}

In Class 1 (up to a change of basis), the matrix $A$ corresponds to multiplication by a non-zero field element $\alpha \in \F_{2^n}$ of multiplicative order 17 and the matrix $B$ corresponds to multiplication by $\alpha^3$. Thus, as a solution, we find the APN function $x \mapsto x^3$. Similarly, in Class 2, the matrix $A$ corresponds to multiplication by a non-zero field element $\alpha \in \F_{2^n}$ of multiplicative order 17 and the matrix $B$ corresponds to multiplication by $\alpha^9$. Therefore, as a solution, we find the APN function $x \mapsto x
^9$.

Class 56 corresponds to those functions $F$ whose univariate representation only contains coefficients in the subfield $\F_{2^4}$.

Let $\zeta_3 \in \F_{2^8}^* \setminus \{1\}$ be a third root of unity. Class 113 corresponds to those functions $F$ for which $F(x) = F(\zeta_3 x)$. Class 51 corresponds to the linear self-equivalence where $A$ is the multiplication by $\zeta_3$ and $B$ is similar to the block diagonal matrix $\diag(I_2,\companion(X^3+1),\companion(X^3+1))$
and Class 55 corresponds to the linear self-equivalence where $A$ is the  multiplication by $\zeta_3$ and $B$ is similar to the block-diagonal matrix $\diag(I_5,\companion(X^3+1))$.
 Note that in both cases, $B$ does not correspond to multiplication by a finite field element or to a linear mapping of the form $x \mapsto x^{2^i}$. 

Let $\zeta_5 \in \F_{2^8}^*\setminus \{1\}$ be a fifth root of unity. Class 38 corresponds to those functions $F$ for which $\zeta_5(F(x)) = F(\zeta_5 x)$. Class 31 corresponds to the linear self-equivalence where $A$ is the multiplication by $\zeta_5$ and $B$ is similar to the block-diagonal matrix $\diag(I_3,\companion(X^5+1))$. 
Again, $B$ does not correspond to multiplication by a finite field element or to a linear mapping of the form $x \mapsto x^{2^i}$.

Class 21 corresponds to the linear self-equivalence given by $A=B=\diag(I_1,\companion(X^7+1))$.

\paragraph{Results for $n=9$}
To the best of our knowledge, by the time of submission, the only known APN instances for $n=9$ either correspond to polynomials with coefficients in $\F_{2}$~\cite{cryptoeprint:2019:1491}, to the (generalized) isotopic shift construction~\cite{DBLP:journals/tit/BudaghyanCCCV20,budaghyan2020generalized}, or to the infinite families given in~\cite{DBLP:journals/ffa/BudaghyanCL09,budaghyan2009construction}. The only APN \emph{permutations} in dimension nine known so far are CCZ-equivalent to monomial functions. We applied our search for $n=9$ and found 35 new APN instances, two of them being permutations. Up to linear-equivalence, those two instances of APN permutations $F_1,F_2$  admit a linear self-equivalence of the form $F_i(u^5x) = uF_i(x)$, where $u \in \F_{2^3}^*$. 

Functions fulfilling this self-equivalence can be characterized by the property that their univariate representation does not contain monomials $x^j$ with $j \neq 3 \mod 7$. Interestingly, this property is also fulfilled by the ``Kim mapping'', i.e., the function
\[K \colon \F_{2^6} \rightarrow \F_{2^6}, \quad x \mapsto x^3 + x^{10} + gx^{24},\]
where $g$ is an element in $\F_{2^6}^*$ with minimal polynomial $X^6 + X^4 + X^3 + X + 1 \in \F_2[X]$. It was shown in~\cite{browning2010apn} that $K$ is CCZ-equivalent to an APN permutation.

\paragraph{Results for $n=10$}
To the best of our knowledge, by the time of submission, the only known APN instances for $n=10$ are either monomial functions or those that come from the infinite families given in~\cite{DBLP:journals/tit/BudaghyanC08}, \cite{DBLP:journals/ffa/BudaghyanCL09}, \cite{DBLP:journals/dcc/Taniguchi19a}, and \cite{DBLP:journals/iacr/BudaghyanHK19} (see the instances 10.1--10.17 in the list available at \url{https://boolean.h.uib.no/mediawiki/index.php/CCZ-inequivalent_representatives_from_the_known_APN_families_for_dimensions_up_to_11}).\footnote{accessed October 6, 2021. As noted in the disclaimer, not all instances of the family given in~\cite{DBLP:journals/tit/BudaghyanCCCV20} could have been checked. Therefore, it could be possible that our instances are coming from that family.} We applied our approach for $n=10$ and found 5 APN instances that are CCZ-inequivalent to the known 17 instances that come from infinite families. 

The look-up tables of all of the new APN instances that we found in dimension $n\in\{8,9,10\}$ are available in~\cite{dataset}.

\section{Further APN Instances from the Switching Construction}
In~\cite{DBLP:journals/amco/EdelP09}, the authors presented the switching construction which potentially allows to generate CCZ-inequivalent APN functions from a given APN function by replacing one of its components.  Using this method, they found the only known APN instance that is not CCZ-equivalent to a monomial function or to a quadratic function. 

\begin{definition}
Two functions $F,G \colon \F_2^n \rightarrow \F_2^n$ are said to be \emph{switching neighbours} if there exists a Boolean function $f \colon \F_2^n \rightarrow \F_2$ and a non-zero vector $v \in \F_2^n$ such that, for all $x \in \F_2^n$, we have $G(x) = F(x) + vf(x)$. 
\end{definition}

It has been shown in~\cite[Theorem 3]{DBLP:journals/amco/EdelP09} that, if $F\colon \F_2^n \rightarrow \F_2^n$ is an APN function, all APN functions that are switching neighbours of $F$ can be found by simple linear algebra. Indeed, for a non-zero vector $v \in \F_2^n$, to find all functions $f \colon \F_2^n \rightarrow \F_2$ such that $F + vf$ is an APN function, one can collect all tuples $(x,y,x+y,a), x,y,a \in \F_2^n$ for which $F(x)+F(x+a)+F(y)+F(y+a)=v$ holds and for each such tuple include the linear equation
\[f_x+f_{x+a}+f_y+f_{y+a} = 0 \]
to the system to solve, where the unknowns are $f_z \in \F_2$ for $z \in \F_2^n$.
Any solution $(f_z)_{z \in \F_2^n}$ of this linear system corresponds to a function $f$ with $f(z) = f_z, z \in \F_2^n$ such that $F + vf$ is APN. 

We applied this method to all of the known (by the time of submission of this manuscript) and new APN instances for $n=7$ and $n=8$. For $n=8$, we found 188 new APN instances in this way. All of those functions are quadratic. The look-up tables of those 188 functions are also available in the dataset~\cite{dataset}.

\section{Conclusion}
We performed a recursive search for quadratic APN functions in small dimension and we have shown that quadratic APN functions with linearity $2^{n-1}$ exist for the case of $n=8$. Further, we found two previously unknown APN permutations in dimension $n=9$. An open question is whether APN functions with linearity $2^{n-1}$ also exist in any even dimension $n > 8$. To answer this question, it would be interesting to generalize one of the four eight-bit APN functions with high linearity presented in this work to an infinite family. Another question is whether the list of known extended Walsh spectra of quadratic APN functions in dimension $n=8$ is complete. Finally, the problem of classifying the new APN permutations in dimension $n=9$ into infinite families remains open. In this direction, it would be interesting to analyze similarities between the Kim function and the two new APN permutations.

\section*{Acknowledgment}
We thank the associate editor and the anonymous reviewers for their detailed and
helpful comments. We further thank Léo Perrin for pointing us to the idea of using the ortho-derivative for checking EA-equivalence of our found functions.

\newcommand{\etalchar}[1]{$^{#1}$}

\end{document}